\begin{document}

\let\doendproof\endproof
\renewcommand\endproof{~\hfill\qed\doendproof}

\newcommand{\leftp}[1]{\texttt{left}(#1)}
\newcommand{\rightp}[1]{\texttt{right}(#1)}

\newcommand{\lefte}[1]{\texttt{leftEdge}(#1)}
\newcommand{\righte}[1]{\texttt{rightEdge}(#1)}

\newcommand{\vis}[1]{\texttt{Vis}(#1)}

\spnewtheorem{myremark}{Remark}{\bfseries}{\itshape}
%\spnewtheorem{obs}{Observation}{\bf}{\it}
\newcommand{\down}[1]{\raisebox{-.4ex}{#1}}
% YOU CAN SET 0ex to have previous settings of i AND j IN THE TABLES.
%\newcommand{\col{#1}}{col}

\makeatletter
    \renewcommand*{\@fnsymbol}[1]{\ensuremath{\ifcase#1\or *\or \dagger\or \S\or
       \mathsection\or \mathparagraph\or \|\or **\or \dagger\dagger
       \or \ddagger\ddagger \else\@ctrerr\fi}}
\makeatother

\newcommand*\samethanks[1][\value{footnote}]{\footnotemark[#1]}

\title{A Note on Approximating 2-Transmitters}

\titlerunning{A Note on Approximating 2-Transmitters}

\author{
Saeed Mehrabi$^1$,
Abbas Mehrabi$^2$
}

\authorrunning{S. Mehrabi and A. Mehrabi}

\institute{
$^1$~Cheriton School of Computer Science, University of Waterloo, Waterloo, Canada.
\email{smehrabi@uwaterloo.ca}\\
$^2$~School of Information and Communications, Gwangju Institute of Science and Technology, Gwangju, South Korea.
\email{mehrabi@gist.ac.kr}
}

%\urldef{\mailsa}\path|{mehrabi}@cs.umanitoba.ca|%

\newcommand{\keywords}[1]{\par\addvspace\baselineskip
\noindent\keywordname\enspace\ignorespaces#1}
%\newtheorem{observation}{Observation}{\bfseries}{\itshape}%

%\pagenumbering{arabic}
%\pagestyle{plain}

\maketitle

\begin{abstract}
%In this paper, we consider the problem of guarding a simple and monotone orthogonal polygon $P$ with the minimum number of sliding 2-trasnmitters.
A \emph{$k$-transmitter} in a simple orthogonal polygon $P$ is a mobile guard that travels back and forth along an orthogonal line segment $s$ inside $P$. The $k$-transmitter can see a point $p\in P$ if there exists a point $q\in s$ such that the line segment $pq$ is normal to $s$ and $pq$ intersects the boundary of $P$ in at most $k$ points. In this paper, we give a 2-approximation algorithm for the problem of guarding a monotone orthogonal polygon with the minimum number of 2-transmitters.
\end{abstract}

\section{Introduction}
\label{sec:introduction}
In the standard version of the art gallery problem, introduced by Klee in 1973~\cite{orourke1987}, we are given a simple polygon $P$ in the plane and the goal is to guard $P$ by a set of point guards. That is, we need to find a set of point guards such that every point in $P$ is seen by at least one of the guards, where a guard $g$ sees a point $p$ if and only if the segment $gp$ is contained in~$P$. Chv\'{a}tal~\cite{chvatal1975} proved that $\lfloor n/3\rfloor$ point guards are always sufficient and sometimes necessary to guard a simple polygon with $n$ vertices. The art gallery problem is known to be \textsc{NP}-hard on arbitrary polygons~\cite{lee1986}, orthogonal polygons~\cite{dietmar1995} and even monotone polygons~\cite{krohn2013}. Eidenbenz~\cite{eidenbenz1998} proved that the art gallery problem is \textsc{APX}-hard on simple polygons and Ghosh~\cite{ghosh2010} gave an $O(\log n)$-approximation algorithm that runs in $O(n^4)$ time on simple polygons. Krohn and Nilsson~\cite{krohn2013} gave a constant-factor approximation algorithm on monotone polygons. They also gave a polynomial-time algorithm for the orthogonal art-gallery problem that computes a solution of size $O(OPT^2)$, where $OPT$ is the cardinality of an optimal solution.

Many variants of the art gallery problem have been studied. Katz and Morgenstern~\cite{katz2011} introduced a variant of this problem in which \emph{$k$-transmitters} are used to guard orthogonal polygons. A $k$-transmitter $T$, where $k\geq 0$, is a point guard that travels back and forth along an orthogonal line segment inside an orthogonal polygon $P$. A point $p$ in $P$ is visible to $T$, if there is a point $q$ on $T$ such that the line segment $pq$ is normal to $T$ and it intersects the boundary of $P$ in at most $k$ points. In the \emph{Minimum $k$-Transmitters (M$k$T)} problem, the objective is to guard $P$ with the minimum number of $k$-transmitters. Katz and Morgenstern introduced the M$k$T problem for only $k=0$ (we remark that 0-transmitters are called \emph{sliding cameras} in~\cite{katz2011}). They first considered a restricted version of the problem, where only vertical 0-transmitters are allowed, and solved this restricted version in polynomial time for simple orthogonal polygons. When both vertical and horizontal 0-transmitters are allowed (i.e., the M0T problem), they gave a 2-approximation algorithm on monotone orthogonal polygons, which was later improved by the $O(n)$-time exact algorithm of de Berg et al.~\cite{deBergDM2014}. Durocher and Mehrabi~\cite{durocherM2013} showed that the M0T problem is \textsc{NP}-hard when $P$ is allowed to have holes. Mahdavi et al.~\cite{mahdaviSG2014} proved that the problem of guarding an orthogonal polygon with $k$-transmitters so as to minimize the total length of line segments along which $k$-transmitters travel is \textsc{NP}-hard for any fixed $k\geq 2$, and gave a 2-approximation algorithm for this problem. To our knowledge, the complexity of the M$k$T problem is open on simple orthogonal polygons for any fixed $k\geq 0$.

We remark that the exact algorithm of de Berg et al.~\cite{deBergDM2014} for the M0T problem on monotone orthogonal polygons does not provide any constant-factor approximation algorithm for the M2T problem. Figure~\ref{fig:deBergAlgorithm} shows a polygon $P$ for which five 0-transmitters are required, but $P$ can be guarded with only one 2-transmitter. Note that the example can be extended to show that an exact solution for the M0T problem does not provide any constant-factor approximation to that of the M2T problem.

\begin{figure}[t]
\centering%
\includegraphics[width=.60\textwidth]{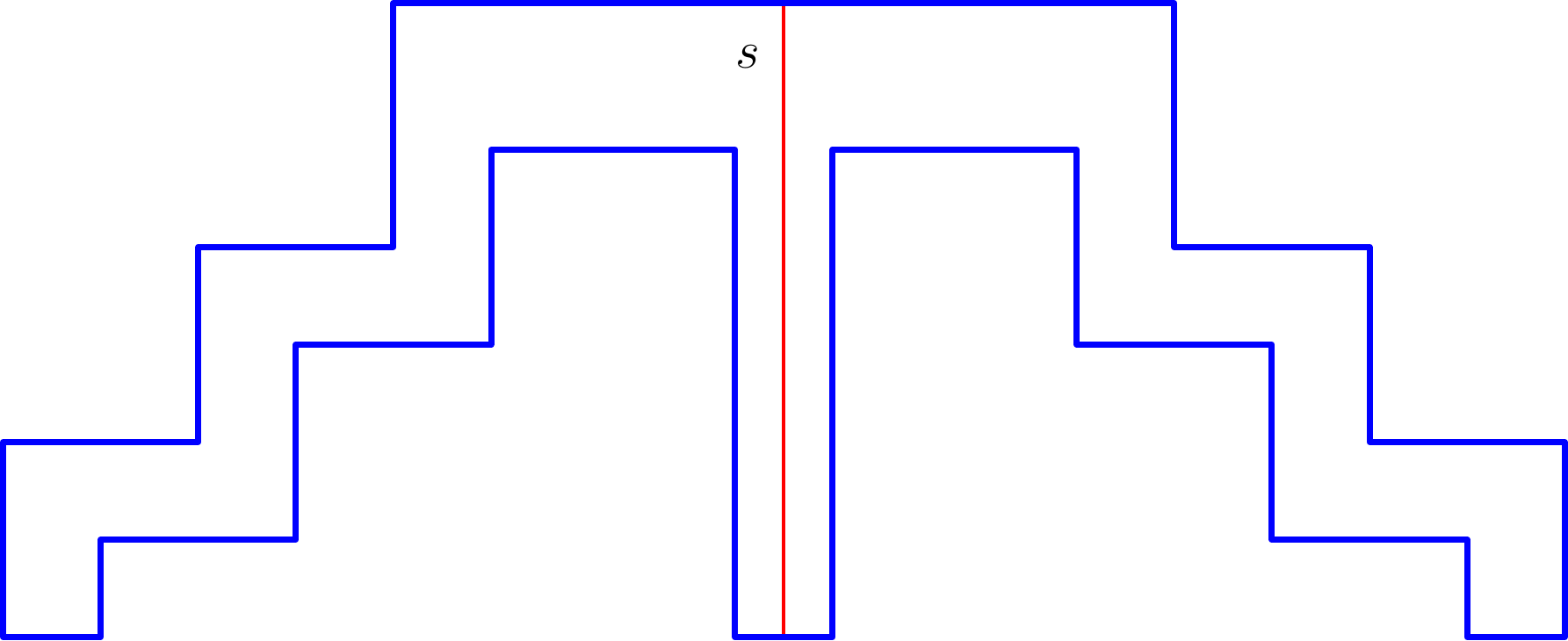}
\caption{A monotone orthogonal polygon $P$ that can be guarded by a single 2-transmitter $s$ while five 0-transmitters are required to guard $P$ entirely. This example can be extended to show that the exact algorithm of de Berg et al.~\cite{deBergDM2014} for the M0T problem does not provide any constant-factor approximation to an exact solution for the M2T problem on $P$.}
\label{fig:deBergAlgorithm}%
\end{figure}

\noindent\paragraph{\bf Our Result.} In this paper, we give a polynomial-time 2-approximation algorithm for the M2T problem on simple and monotone orthogonal polygons. Some preliminaries are given in Section~\ref{sec:prelimin}. We then present our 2-approximation algorithm in Section~\ref{sec:3approximation} and conclude the paper in Section~\ref{sec:conclusion}.

%%%%%%%%%%%%%%%%%%%%%%%% NEW SECTION %%%%%%%%%%%%%%%%%%%%%%%%%%%
%%%%%%%%%%%%%%%%%%%%%%%%%%%%%%%%%%%%%%%%%%%%%%%%%%%%%%%%%%%%%%

\section{Preliminaries}
\label{sec:prelimin}
Throughout this paper, let $P$ be a simple and $x$-monotone orthogonal polygon with $n$ vertices. A vertex $u$ of $P$ is called \emph{convex} (resp., \emph{reflex}), if the angle at $u$ that is interior to $P$ is 90$^\circ$ (resp., 270$^\circ$). We denote the leftmost and rightmost vertical edges of $P$ (that are unique) by $\lefte{P}$ and $\righte{P}$, respectively. Let $V_P=\{e_1=\lefte{P},e_2,\dots,e_m=\righte{P}\}$, for some $m>0$, be the set of vertical edges of $P$ ordered from left to right. Let $P^+_i$ (resp., $P^-_i$), for some $1\leq i\leq m$, denote the subpolygon of $P$ that lies to the right (resp., to the left) of the vertical line through $e_i$.
%For simplicity we assume that every two vertical edges in $V_P$ have distinct $x$-coordinates, but it is easy to adapt the algorithm to handle degenerate cases.

Let $s$ be an orthogonal line segment in $P$. We denote the left endpoint and the right endpoint of $s$ by $\leftp{s}$ and $\rightp{s}$, respectively. If $s$ is vertical, we define its left and right endpoints to be its upper and lower endpoints, respectively. Moreover, we denote the $k$-transmitter that travels along $s$ by $s^{(k)}$. For a $k$-transmitter $t$ in $P$, we define the \emph{visibility polygon} of $t$ as the maximal subpolygon $\vis{t}$ of $P$ such that every point in $\vis{t}$ is guarded by $t$.
%We denote the $x$-coordinate of a point $p$ by~$x(p)$.

%Let $s_i$ and $s_j$ be two horizontal line segments in $P$.  We define the \emph{overlap region of $s_i$ and $s_j$} as the set of points in $P$ that are visible to both $s_i$ and $s_j$; if $P(s_i)\cap P(s_j)$ is a line or a point (i.e., it has measure zero), then we consider the overlap region of $s_i$ and $s_j$ to be empty. We first show that we can restrict our attention to solutions that are in some suitable canonical form.

%We call a horizontal edge $e$ of $P$ a \emph{convex edge} (resp., \emph{reflex edge}) if both endpoints of $e$ are convex vertices (resp., reflex vertices) of $P$. Moreover, we denote the maximal vertical line segment in $P$ that is aligned with the left endpoint (resp., right endpoint) of a reflex edge $e$ of $P$ by $s_L(e)$ (resp., by $s_R(e)$). We denote the visibility polygon of a $k$-transmitter $t\in P$ by $V(t)$. Finally, for an orthogonal line segment $s\in P$, we denote the $k$-transmitter that travels along $s$ by $s^{(k)}$.

For each reflex vertex $v$ of $P$, extend the edges incident to $v$ inward until they hit the boundary of $P$. Let $C(P)$ be the set of all maximal line segments in $P$ that are obtained in this way. A \emph{feasible solution} for the M2T problem is a set $M$ of 2-transmitters that guards the entire polygon $P$. A feasible solution $M$ is \emph{optimal} (or, \emph{exact}) if $\lvert M\rvert\leq \lvert S'\rvert$, for all feasible solutions $S'$. We say that a feasible solution $M$ for the M2T problem is in \emph{standard form} if and only if $M\subseteq C(P)$ and every vertical 2-transmitter in $M$ is \emph{vertically maximal}; that is, it extends as far upwards and downwards as possible.

\begin{lemma}
\label{lem:standardOPT}
There exists an optimal solution $OPT^*$ for the M2T problem on $P$ that is in standard form.
\end{lemma}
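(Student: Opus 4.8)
The plan is to start from an arbitrary optimal solution and transform it, one transmitter at a time, into standard form without increasing the number of transmitters or destroying feasibility.

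The plan is to start from an arbitrary optimal solution $OPT$ and rewrite it one transmitter at a time, maintaining the invariant that the current set is feasible and has cardinality $\lvert OPT\rvert$. Since each step only replaces a transmitter $t$ by another transmitter $t'$ whose visibility polygon \emph{contains} that of $t$, feasibility and optimality are preserved throughout; once every transmitter lies in $C(P)$ and every vertical transmitter is vertically maximal, the resulting set is the desired $OPT^*$. Hence it suffices to show that any single $2$-transmitter $t$ can be replaced by a $2$-transmitter $t'$ satisfying $\vis{t}\subseteq\vis{t'}$ that already meets the two standard-form conditions.

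The condition of \emph{vertical maximality} is the easy one, so I would dispatch it first. Given a vertical transmitter $s^{(2)}$, I extend $s$ upward and downward until both endpoints reach $\partial P$. Every point that previously lay on $s$ still lies on the extended segment, so every normal that witnessed the visibility of a point still does; thus the visibility polygon can only grow and feasibility is preserved. It is also worth recording, before attacking the harder condition, a consequence of $x$-monotonicity: a vertical segment joining two interior points of $P$ with the same abscissa never meets $\partial P$, so a horizontal transmitter sees exactly the part of $P$ in its vertical slab, independently of its height. This makes the horizontal case essentially routine and concentrates the real difficulty on vertical transmitters, whose witnessing normals are horizontal and may legitimately pierce $\partial P$ twice.

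For \emph{membership in $C(P)$} I would translate $t$ perpendicular to its orientation. Consider the subdivision of $P$ induced by $C(P)$ and a vertical transmitter at abscissa $x_0$ lying strictly inside an open slab that contains no segment of $C(P)$. The key claim is that $t$ can be translated to a vertical line of $C(P)$ bounding its slab without deleting any point from its visibility polygon, so that after finitely many such translations the support of every transmitter coincides with a segment of $C(P)$. To justify the claim I would fix a point $p$ seen by $t$ and track the horizontal normal that certifies its visibility: as the transmitter sweeps across the slab, this normal changes only one endpoint, and since the slab contains no reflex-vertex extension, I would argue that the number of times the normal crosses $\partial P$ stays at most $2$ while the foot of the normal stays on the (now vertically maximal) transmitter. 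Points lying inside the slab itself are covered regardless of the exact abscissa of $t$ within it.

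The step I expect to be the main obstacle is exactly this last claim. The difficulty is specific to $2$-transmitters: translating the transmitter to the right shortens the normals to points on its right but \emph{lengthens} the normals to points on its left, so there is no single direction in which all witnessing normals shrink, and a lengthened normal could acquire a forbidden third crossing of $\partial P$. Resolving this is the crux: I would lean on the defining property of $C(P)$, namely that the reflex-vertex extensions are precisely the loci at which the crossing count along a family of parallel horizontal normals can change, to conclude that within a slab free of such extensions no normal gains a crossing; points on the ``lengthening'' side that nonetheless seem at risk must be shown to be recovered—either because they are swept over by the moving transmitter or because the relevant portion of $\partial P$ cannot lie in a slot-free slab. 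Making this crossing bookkeeping rigorous, and checking the symmetric cases for horizontal transmitters and for points on either side of a vertical one, is where the real work lies.
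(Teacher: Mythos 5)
Your proposal is correct and takes essentially the same approach as the paper's proof: start from an arbitrary optimal solution and replace each transmitter, one at a time, by a translated and maximally extended segment aligned with $C(P)$ whose visibility polygon contains the original's. In fact, your slab-by-slab crossing-count bookkeeping is a more careful treatment of exactly the step the paper's proof waves away with ``Clearly, $OPT^*$ is a feasible solution.''
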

\begin{proof}
Take any optimal solution $OPT$ for the M2T problem on $P$. First, for each line segment $s\in OPT$ that is not aligned with an edge of $P$, move $s$ vertically up or down, or horizontally to the left or right until it hits an edge of $P$. Next, for every line segment $s'\in OPT$ that is not maximal, replace $s'$ with the maximal line segment in $P$ that is aligned with $s'$. Set $OPT^* :=OPT$. Clearly, $OPT^*$ is a feasible solution for the M2T problem and every line segment in $OPT^*$ is maximal and aligned with an edge of $P$. So, $OPT^*\subseteq C(P)$. Since $\lvert OPT^*\rvert \leq \lvert OPT\rvert$, we conclude that $OPT^*$ is an optimal solution for the M2T problem that is in standard form. This completes the proof of the lemma.
\end{proof}

For a horizontal line segment $t\in P$ and any $k>0$, the visibility polygon of a 0-transmitter that travels along $t$ is the same as that of a $k$-transmitter that travels along $t$. We state and prove this observation more formally.

\begin{lemma}
\label{lem:horizontal0andKTransmitters}
Let $t$ be a horizontal line segment in $P$. Then, $\vis{t^{(0)}}=\vis{t^{(k)}}$ for any $k>0$.
\end{lemma}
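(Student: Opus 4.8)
The plan is to prove the two inclusions separately. The inclusion $\vis{t^{(0)}}\subseteq\vis{t^{(k)}}$ is immediate: if a point $p$ is seen by $t^{(0)}$, then there is a point $q\in t$ such that the segment $pq$ is normal to $t$ and meets the boundary of $P$ in at most $0$ points; since $0\leq k$, the very same segment $pq$ also certifies that $p$ is seen by $t^{(k)}$. The substance of the lemma is therefore the reverse inclusion $\vis{t^{(k)}}\subseteq\vis{t^{(0)}}$, which says that for a \emph{horizontal} transmitter the extra ``see-through-walls'' power of a $k$-transmitter buys nothing.

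The key ingredient is the $x$-monotonicity of $P$. First I would record the elementary fact that, because $P$ is $x$-monotone, for every $x$-coordinate $x_0$ the intersection $P\cap\{x=x_0\}$ is a single (possibly degenerate) vertical segment; in particular it is connected. This is exactly the statement that a vertical line meets the boundary of an $x$-monotone polygon in at most two points, so that the portion of the line lying inside $P$ is one interval.

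Now take any $p\in\vis{t^{(k)}}$, witnessed by a point $q\in t$ such that the segment $pq$ is normal to $t$ and meets the boundary of $P$ in at most $k$ points. Since $t$ is horizontal, $pq$ is vertical, so $p$ and $q$ have the same $x$-coordinate $x_0$, and both lie in $P\cap\{x=x_0\}$ (recall $q\in t\subseteq P$). By the connectivity just established, the entire segment $pq$ is contained in this one vertical slice and hence in $P$; consequently $pq$ cannot cross the boundary of $P$ in its relative interior, and meets the boundary only, possibly, at its endpoints. Thus the same segment $pq$ already satisfies the defining condition of a $0$-transmitter, so $p\in\vis{t^{(0)}}$, which completes the reverse inclusion and the proof.

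The step I expect to require the most care is not the geometry but the bookkeeping around the boundary-point convention: I must make sure that whatever counting rule is used for ``intersects the boundary in at most $k$ points'' (in particular, whether an endpoint lying on the boundary of $P$ is counted) is applied consistently, so that a segment contained in a single vertical slice genuinely has the same boundary-intersection count under both the $0$- and the $k$-transmitter definitions. Once that convention is fixed, $x$-monotonicity does all the real work.
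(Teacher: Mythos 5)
Your proof is correct and follows essentially the same route as the paper: the inclusion $\vis{t^{(0)}}\subseteq\vis{t^{(k)}}$ is immediate, and the reverse inclusion uses $x$-monotonicity to conclude that the vertical segment $pq$ lies inside $P$ and so does not cross the boundary. You merely spell out the connectivity of the vertical slice $P\cap\{x=x_0\}$, a detail the paper leaves implicit, which is a welcome addition rather than a deviation.
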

\begin{proof}
It is clear that any point in $P$ that is visible to $t^{(0)}$ is also visible to $t^{(k)}$ and so $\vis{t^{(0)}}\subseteq \vis{t^{(k)}}$. Now, let $p$ be a point in $P$ that is visible to $t^{(k)}$. Since $t$ is horizontal and $P$ is an $x$-monotone orthogonal polygon, we conclude that the line segment $pq$ does not intersect the boundary of $P$, where $q$ is the projection of $p$ onto $t$. This means that $p$ is also visible to $t^{(0)}$ and therefore, $\vis{t^{(k)}}\subseteq \vis{t^{(0)}}$. This completes the proof of the lemma.
\end{proof}

%%%%%%%%%%%%%%%%%%%%%%%% NEW SECTION %%%%%%%%%%%%%%%%%%%%%%%%%%%
%%%%%%%%%%%%%%%%%%%%%%%%%%%%%%%%%%%%%%%%%%%%%%%%%%%%%%%%%%%%%%

\section{A 2-Approximation Algorithm}
\label{sec:3approximation}
In this section, we give our 2-approximation algorithm for the M2T problem on monotone orthogonal polygons. Recall that in the M2T problem, the objective is to guard the polygon $P$ with minimum number of 2-transmitters, where a 2-transmitter can be either horizontal or vertical. For a point $p\in P$, let $L(p)$ denote the vertical line through $p$. We say that a horizontal 2-transmitter in $P$ is \emph{rightward maximal} if it extends as far to the right as possible.

The algorithm initially guards a leftmost portion of the polygon $P$ by two 2-transmitters with different orientations, and then will guard the remaining part of $P$ recursively. The order of the two initial 2-transmitters is determined by whether locating first a vertical 2-transmitter and then a horizontal one would guard a larger portion of $P$ than locating first a horizontal 2-transmitter and then a vertical one. In the following, we describe the algorithm more formally.

\noindent\paragraph{\bf Algorithm.} Let $s_v$ be the rightmost maximal vertical 2-transmitter in $P$ such that every point of $P$ that is to the left of $s_v$ is seen by $s_v$; let $p$ be the leftmost point of $P$ that is not seen by $s_v$. Moreover, let $s_h$ be the rightward maximal horizontal 2-transmitter in $P$ such that $\leftp{s_h}$ lies on $L(p)$. Clearly, $\rightp{s_h}$ lies on a vertical edge $e_i$ of $P$. Observe that $P^-_i$ is entirely guarded by $s_v$ and $s_h$. Given $P$, we define \texttt{vHFinder($P$)} as a method that computes $s_v$ and $s_h$ as described above and returns the triple $(s_v, s_h, e_i)$. Note that \texttt{vHFinder($P$)} guards $P^-_i$ by first locating a vertical 2-transmitter and then a horizontal one from left to right. We next consider the other case.

Let $s'_h$ be the rightward maximal horizontal 2-transmitter in $P$ such that every point of $P$ that is to the left of $L(\rightp{s'_h})$ is seen by $s'_h$. Suppose that $\rightp{s'_h}$ lies on some vertical edge $e_\ell$ ($1\leq \ell\leq m$) of $P$. Let $s'_v$ be the rightmost maximal vertical 2-transmitter in $P$ such that every point of $P$ that lies between $L(\rightp{s'_h})$ and $s'_v$ is guarded by $s'_v$. Moreover, let $p'$ be the leftmost point of $P^+_\ell$ that is not seen by $s'_v$; clearly, $p'$ lies on a vertical edge $e_j$ ($1\leq j\leq m$) of $P$. Observe that $s'_h$ and $s'_v$ guard $P^-_j$ entirely. We now define \texttt{hVFinader($P$)} as a method that computes $s'_h$ and $s'_v$ as described above and returns the triple $(s'_h, s'_v, e_j)$.

\begin{algorithm}[t]
\caption{\textsc{Approximate2Transmitters($P$)}}
\label{alg:approximate2transmitters}
\begin{algorithmic}[1]
\For {each line segment $s\in C(P)$}
	\If {$\vis{s}\subseteq \bigcup_{s'\in C(P)\setminus \{s\}} \vis{s'}$}
		\State $C(P) \gets C(P)\setminus \{s\}$;
	\EndIf
\EndFor
\State $S \gets \emptyset$;
\While {$P \ne \emptyset$}
	\State $(s_v, s_h, e_i)\gets$ \texttt{vHFinder($P$)};\Comment{$\{s_v, s_h\}\subseteq C(P)$}
	\State $(s'_h, s'_v, e_j)\gets$ \texttt{hVFinder($P$)};\Comment{$\{s'_h, s'_v\}\subseteq C(P)$}
	\If {$i>j$}
		\State $S \gets S\cup\{s_v, s_h\}$;
		\State $P\gets P^+_i$;
	\Else
		\State $S \gets S\cup\{s'_h, s'_v\}$;
		\State $P\gets P^+_j$;
	\EndIf
\EndWhile
\State \Return $S$;
\end{algorithmic}
\end{algorithm}

The algorithm is shown in Algorithm~\ref{alg:approximate2transmitters}. In the first step of the algorithm, we remove from $C(P)$ those line segments whose visibility polygon is a subset of the union of the visibility polygons of all other line segments in $C(P)$. Then, in a while-loop, we iteratively \begin{inparaenum}[(i)] \item compute the pairs of 2-transmitters $\{s_v, s_h\}$ and $\{s'_h, s'_v\}$ using the methods \texttt{vHFinder($P$)} and \texttt{hVFinder($P$)}, respectively, and then \item update $P$ depending on whether $i>j$ (i.e., the 2-transmitters $\{s_v, s_h\}$ guard a larger portion of $P$ than $\{s'_h, s'_v\}$) or $j\geq i$ (i.e., the 2-transmitters $\{s'_h, s'_v\}$ guard a larger portion of $P$ than $\{s_v, s_h\}$). \end{inparaenum} We remark here that by Lemma~\ref{lem:standardOPT}, we can assume that both methods \texttt{vHFinder($P$)} and \texttt{hVFinder($P$)} select the 2-transmitters from the set $C(P)$. When $P$ is entirely guarded, we return the set $S$ of 2-transmitters.

\noindent\paragraph{\bf Analysis.} We first note that by Lemma~\ref{lem:standardOPT}, we can assume that the four 2-transmitters computed by \texttt{vHFinder($P$)} and \texttt{hVFinder($P$)} are always in standard form. That is, we restrict our attention to the line segments in $C(P)$ when computing the set $S$. To see the approximation factor of the algorithm, let $P_1, P_2,\dots, P_k$ be the partition of $P$ into $k$ subpolygons ordered from left to right such that the subpolygon $P_i$ is guarded in the $i$th iteration of the while-loop. More precisely, $P_i$ is the subpolygon of $P$ that is cut out from $P$ in the $i$th iteration of the while-loop of the algorithm. It is clear that Algorithm~\ref{alg:approximate2transmitters} locates at most $2k$ 2-transmitters to guard $P$ entirely; that is, $\lvert S \rvert\leq 2k$. In the following, we show that $\lvert OPT\rvert\geq k$ for any optimal solution $OPT$ for the M2T problem on $P$.

\begin{lemma}
\label{lem:optIsK}
Let $OPT$ be an optimal solution for the M2T problem on $P$. Then, $\lvert OPT\rvert\geq k$.
\end{lemma}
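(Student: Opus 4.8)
The plan is to prove the matching lower bound $\lvert OPT\rvert\ge k$ by exhibiting, for each of the $k$ iterations of the while-loop, a \emph{witness} that forces a distinct 2-transmitter of $OPT$. Concretely, for the $i$th iteration let $p_i$ denote the point used inside \texttt{vHFinder($P$)} (the leftmost point of the current polygon that is not seen by $s_v$), and let $b_i$ be the $x$-coordinate of the cut edge $e_i$ (resp.\ $e_j$) chosen by the algorithm in that iteration; thus $p_1,\dots,p_k$ have strictly increasing $x$-coordinates and each $p_i$ lies in $P_i$. I would first argue that no single 2-transmitter of $OPT$ can see two of these witnesses, say $p_i$ and $p_{i+1}$; granting this, the transmitter of $OPT$ that must see $p_i$ (one exists because $OPT$ guards all of $P$) is different for each $i$, and a pigeonhole argument over the $k$ witnesses immediately yields $\lvert OPT\rvert\ge k$.

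To establish the key claim I would split on the orientation of a transmitter $t\in OPT$ that sees $p_i$. If $t$ is vertical, then $t$ sees $p_i$ through a horizontal segment meeting the boundary of $P$ in at most two points; the crucial input is the rightmost-maximality of $s_v$ in \texttt{vHFinder}, together with the $x$-monotonicity of $P$, which I would use to show that the horizontal strip of $t$ cannot extend past the cut at $b_i$ without violating the maximality that defined $p_i$, so $t$ cannot also reach $p_{i+1}\in P^+_i$. If $t$ is horizontal, I would invoke Lemma~\ref{lem:horizontal0andKTransmitters} to replace $t$ by a $0$-transmitter, so that the visibility of $t$ is a vertical strip; here I would use the rightward-maximality of $s'_h$ in \texttt{hVFinder} to argue that if a single horizontal transmitter covered a contiguous prefix reaching from $p_i$ beyond $p_{i+1}$, then \texttt{hVFinder} (and hence the comparison $i>j$ in the while-loop) would have advanced $b_i$ past $p_{i+1}$, contradicting the choice of $p_{i+1}$ as the next witness.

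I would then need to reconcile the two ways a piece can be produced (the $i>j$ branch using $\{s_v,s_h\}$ versus the $j\ge i$ branch using $\{s'_h,s'_v\}$): in either branch the algorithm keeps the pair whose cut edge is farther to the right, so the boundary $b_i$ is genuinely the best a single greedy step can achieve, and this is exactly the property that the witness argument exploits. A clean way to package the induction is to peel off the guarded prefix $P_1$ and recurse on $P_2\cup\cdots\cup P_k$, checking that the transmitter forced by $p_1$ is not reusable for the remaining witnesses.

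I expect the horizontal case to be the main obstacle. A horizontal 2-transmitter can be arbitrarily long in the $x$-direction, and its vertical strip can see two floor (or two ceiling) points while missing the boundary in between, so it is not a priori clear that it cannot see both $p_i$ and $p_{i+1}$. Turning the rightward-maximality of $s'_h$ into a genuine contradiction---rather than merely ``$t$ sees some prefix''---is the delicate point, and it may be necessary to choose the witnesses more carefully than ``leftmost unseen point'' (for instance, pairing each witness with the reflex vertex whose extension in $C(P)$ blocked $s_v$) so that a single horizontal transmitter provably cannot resolve two consecutive pieces.
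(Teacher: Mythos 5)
Your plan is a genuinely different strategy from the paper's (witness points plus a pigeonhole, rather than the paper's charging argument), but the entire proof rests on the separation claim --- no transmitter of $OPT$ sees two witnesses --- and that claim is precisely the hard part; your sketch does not establish it, and the two cases are in fact hard in the opposite order from what you expect. The horizontal case is the easy one: by Lemma~\ref{lem:horizontal0andKTransmitters}, in a monotone polygon the visibility region of a horizontal 2-transmitter $t$ is exactly the vertical slab over its $x$-extent, so if $t$ saw both $p_i$ and $p_{i+1}$, then $t\cap\{x\geq x(p_i)\}$ would be a horizontal segment with left endpoint on $L(p_i)$ whose right endpoint lies strictly to the right of $e_i$ (since $x(p_{i+1})$ exceeds the $x$-coordinate of the cut, which is at least that of $e_i$), contradicting the rightward maximality of $s_h$. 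Note that this argument must use $s_h$ from \texttt{vHFinder($P$)}, whose left endpoint is required to lie on $L(p_i)$; your sketch instead invokes $s'_h$ from \texttt{hVFinder($P$)}, which must start on the left edge of the current polygon and sees everything to the left of its right endpoint, so an $OPT$ horizontal whose left endpoint lies in the middle of $P_i$ is simply not a candidate for $s'_h$ and no contradiction with it arises. The vertical case is where the real gap sits: the rightmost maximality of $s_v$ constrains only verticals that see \emph{everything} to their left, and a vertical $t\in OPT$ is under no such obligation. Moreover a vertical 2-transmitter gets a fresh budget of two boundary crossings for \emph{each} point it sees, so it can see across one ``gap'' per point; nothing you cite prevents $t$ from lying to the right of the cut, seeing $p_i$ across one gap at height $y(p_i)$, and seeing $p_{i+1}$ across a different gap at height $y(p_{i+1})$ (it can even see $p_{i+1}$ while lying to the \emph{left} of the $(i{+}1)$-st greedy vertical, via a segment that exits and re-enters $P$ around that vertical's column). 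Your statement that ``the horizontal strip of $t$ cannot extend past the cut at $b_i$'' has no basis: $\vis{t}$ is not a strip, and for 2-transmitters it routinely extends past cuts.

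Two further holes, and a structural comparison. First, injectivity of the witness-to-transmitter assignment needs the separation claim for \emph{all} pairs $p_i,p_j$, not only consecutive ones; for horizontals the consecutive case implies the general one (slabs are $x$-convex), but for verticals it does not. Second, the witness of the last iteration may not exist: the point $p$ inside \texttt{vHFinder($P$)} is undefined when $s_v$ already sees the entire remaining polygon, so your pigeonhole as stated only yields $\lvert OPT\rvert\geq k-1$. Contrast this with the paper's proof, which never tries to separate witnesses: it charges each transmitter of $OPT$ to at most one piece (a vertical to the piece containing it, a horizontal to the piece where its left endpoint lies), and for a piece receiving no charge it derives a contradiction whose crucial tool is the first for-loop of Algorithm~\ref{alg:approximate2transmitters} --- a segment whose visibility polygon is contained in the union of the others' would have been deleted from $C(P)$. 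That preprocessing step is exactly what handles the scenario your argument cannot exclude, namely a piece (and hence your witness inside it) being covered entirely by $OPT$ transmitters located outside the piece; your proposal never invokes this step, which is a strong indication that the vertical-case separation cannot be obtained from greedy maximality alone. To repair your proof you would have to supply that missing argument, and doing so appears to amount to reproducing the paper's case analysis.
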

\begin{proof}
By Lemma~\ref{lem:standardOPT}, we assume that $OPT$ is in standard form; that is, $OPT\subseteq C(P)$ and every vertical 2-transmitter in $OPT$ is vertically maximal. Consider the partition $T=\{P_1, P_2,\dots, P_k\}$ of $P$ induced by the recursive steps of the algorithm, and let $s$ be a horizontal line segment in $P$. We say that $s$ \emph{originates} from $P_j$, for some $1\leq j\leq k$, if $\leftp{s}$ lies inside $P_j$. Suppose for a contradiction that $\lvert OPT\rvert<k$. Then, there must be a subpolygon $P_i\in T$ such that neither a vertical 2-transmitter of $OPT$ lies in $P_i$ nor a horizontal 2-transmitter of $OPT$ originates from $P_i$. We then must have one of the followings (w.l.o.g., we assume that Algorithm~\ref{alg:approximate2transmitters} located the pair $\{s_v, s_h\}$ in $P_i$):
\begin{itemize}
\item There exists at least one horizontal 2-transmitter in $OPT$ that intersects $\lefte{P_i}$ (and, therefore its left endpoint lies to the left of $\lefte{P_i}$). Let $s^*_h$ be the rightward maximal horizontal 2-transmitter among all such 2-transmitters. Clearly, $s^*_h$ does not see $P_i$ entirely because then \texttt{hVFinder($P$)} would have selected the portion of $s^*_h$ that lies in $P_i$ along with the vertical line segment $s'_v$ and so $P_i$ would have been extended further to the right. Now, let $P'_i := P_i\setminus \vis{s^*_h}$. Since $s^*_h$ is rightward maximal and there is no horizontal 2-transmitter of $OPT$ that is originated from $P_i$, we conclude that no horizontal 2-transmitter in $P$ sees a point in $P'_i$. Therefore, there must a vertical 2-transmitter $s^*_v$ that guards $P'_i$ and that $s^*_v$ lies to the left of $\lefte{P_i}$ or to the right of $\righte{P_i}$ (recall that there is no vertical 2-transmitter of $OPT$ inside $P_i$). \begin{inparaenum}[(i)] \item If $s^*_v$ lies to the right of $\righte{P_i}$, then our algorithm would have added $s^*_v$ and the portion of $s^*_h$ that lies in $P_i$ into $S$ and so $P_i$ would have been extended further to the right --- a contradiction. \item If $s^*_v$ lies to the left of $\lefte{P_i}$, then we observe that $s^*_v$ and $s_h$ (i.e., the horizontal 2-transmitter located in $P_i$ by our algorithm) would together guard $P_i$ entirely. This means that $\vis{s_v}\subseteq (\vis{s^*_v}\cup \vis{s_h})$ and so $s_v$ should have been removed from $C(P)$ in the first step of the algorithm --- a contradiction. \end{inparaenum}
\item There is no horizontal 2-transmitter of $OPT$ intersecting $\lefte{P_i}$. This means that no point inside $P_i$ is seen by a horizontal 2-transmitter in $P_i$. Moreover, since no vertical 2-transmitter of $OPT$ lies in $P_i$, we conclude that $P_i$ is guarded by a set $M\subseteq OPT$ of \emph{only-vertical} 2-transmitters that lie to the left of $\lefte{P_i}$ or to the right of $\righte{P_i}$. That is, $P_i \subseteq \bigcup_{s_j\in M} \vis{s_j}$. But, this means that $\vis{s_v}\subseteq \bigcup_{s_j\in M} \vis{s_j}$, which is a contradiction because then $s_v$ should have been removed from $C(P)$ in the first step of the algorithm.
\end{itemize}
By the two cases above, we conclude that $\lvert OPT\rvert \geq k$. This completes the proof of the lemma.
\end{proof}

Each call to methods \texttt{vHFinder($P$)} and \texttt{hVFinder($P$)} is completed in polynomial time. Moreover, the while-loop of Algorithm~\ref{alg:approximate2transmitters} terminates after at most $m$ iterations (recall that $m$ is the number of the vertical edges of $P$) because at least one new vertical edge of $P$ is guarded at each iteration. Therefore, Algorithm~\ref{alg:approximate2transmitters} runs in polynomial time. Therefore, by Lemma~\ref{lem:optIsK} and the fact that $\lvert S\rvert\leq 2k$, we have the main result of this paper:

\begin{theorem}
\label{thm:mainApproxResult}
There exists a polynomial-time 2-approximation algorithm for the M2T problem on monotone orthogonal polygons.
\end{theorem}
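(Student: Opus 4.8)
The plan is to exhibit Algorithm~\ref{alg:approximate2transmitters} as the claimed approximation algorithm and to verify three things about it: that its output $S$ is a feasible solution for the M2T problem, that $S$ has size at most twice the optimum, and that it runs in polynomial time. Since Lemma~\ref{lem:optIsK} already supplies the crucial lower bound $\lvert OPT\rvert\geq k$, where $k$ is the number of iterations of the while-loop, the remaining work is to pin down the upper bound $\lvert S\rvert\leq 2k$ together with feasibility and running time, and then to combine all three ingredients into the final statement.

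First I would argue feasibility and the size bound simultaneously by tracking the partition $\{P_1,\dots,P_k\}$ of $P$ induced by the while-loop. In the $i$th iteration the algorithm compares the two candidate pairs $\{s_v,s_h\}$ and $\{s'_h,s'_v\}$ returned by \texttt{vHFinder($P$)} and \texttt{hVFinder($P$)}, keeps whichever pair extends the guarded region further to the right (i.e.\ whichever of $e_i$, $e_j$ is rightmost), and cuts $P_i$ out of $P$. By the construction of the two helper methods each chosen pair guards its $P_i$ entirely, so when the loop halts (when $P=\emptyset$) the union of all chosen transmitters covers $\bigcup_i P_i=P$, which establishes feasibility. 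Because each iteration contributes exactly two transmitters to $S$ and there are $k$ iterations, this immediately yields $\lvert S\rvert\leq 2k$.

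I would then close the approximation argument by invoking Lemma~\ref{lem:optIsK}: for any optimal solution $OPT$ we have $\lvert OPT\rvert\geq k$, whence $\lvert S\rvert\leq 2k\leq 2\lvert OPT\rvert$, so the algorithm is a $2$-approximation. For the running time, I would observe that each call to \texttt{vHFinder($P$)} and \texttt{hVFinder($P$)} is a polynomial-time computation, since it only scans the reflex-vertex extensions in $C(P)$ and the vertical edges $V_P$ to locate maximal transmitters; moreover the while-loop executes at most $m$ times because each iteration guards at least one previously unguarded vertical edge of $P$. Hence the total running time is polynomial, and the three ingredients assemble into the theorem.

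The main obstacle is not in the statement itself but in the guarantee that each iteration makes strict progress --- that the rightmost of $e_i$ and $e_j$ always lies strictly to the right of the current left boundary of $P$, so that $\{P_1,\dots,P_k\}$ is genuinely a partition of $P$ into nonempty pieces and the loop terminates after at most $m$ steps. This is precisely what makes both the clean bound $\lvert S\rvert\leq 2k$ and the lower bound of Lemma~\ref{lem:optIsK} meaningful; once strict progress and per-iteration feasibility are secured, the theorem follows by combining $\lvert S\rvert\leq 2k$, $\lvert OPT\rvert\geq k$, and the polynomial iteration bound.
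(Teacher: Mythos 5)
Your proposal is correct and follows essentially the same route as the paper: it exhibits Algorithm~\ref{alg:approximate2transmitters}, combines the per-iteration bound $\lvert S\rvert\leq 2k$ with the lower bound $\lvert OPT\rvert\geq k$ from Lemma~\ref{lem:optIsK}, and argues polynomial running time via the at-most-$m$ iterations of the while-loop (one new vertical edge guarded per iteration), exactly as the paper does. Your explicit attention to feasibility and strict progress is a slight elaboration of points the paper leaves implicit, not a different approach.
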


%%%%%%%%%%%%%%%%%%%%%%%% NEW SECTION %%%%%%%%%%%%%%%%%%%%%%%%%%%
%%%%%%%%%%%%%%%%%%%%%%%%%%%%%%%%%%%%%%%%%%%%%%%%%%%%%%%%%%%%%%

\section{Conclusion}
\label{sec:conclusion}
In this paper, we gave a polynomial-time 2-approximation algorithm for the M2T problem on monotone orthogonal polygons. The complexity of the problem remains open on simple orthogonal polygons. Similar to Katz and Morgenstern~\cite{katz2011}, it might be interesting to first consider the problem with only-vertical 2-transmitters.

%-------------------------------------------------------------------
%BIBLIOGRAPHY
\bibliographystyle{plain}
\bibliography{ref}

\end{document}